\theoremstyle{thmstyleone}%
\newtheorem{theorem}{Theorem}
\newtheorem{corollary}[theorem]{Corollary}
\theoremstyle{thmstyletwo}%
\theoremstyle{thmstylethree}%
\begin{document}

\title[Article Title]{Training-Free Certified Bounds for Quantum Regression: A Scalable Framework}




\author*[1]{\fnm{Demerson} \sur{N. Gonçalves}}\email{demerson.goncalves@cefet-rj.br}

\author[2]{\fnm{Tharso}  \sur{D. Fernandes}}\email{tharso.fernandes@ufes.br}

\author[3]{\fnm{Pedro  } \sur{H. G. Lugao}}\email{pedro.lugao@cefet-rj.br}
\author[4]{\fnm{João  } \sur{T. Dias}}\email{joao.dias@cefet-rj.br}



\affil*[1]{Dept. of Mathematics, Federal Center for Technological Education Celso Suckow da Fonseca (CEFET-RJ), Petrópolis, RJ, Brazil}
\affil[2]{Dept. of Mathematics, Federal University of Espírito Santo (UFES), Alegre, ES, Brazil}
\affil[3]{Dept. of Computer Engineering, CEFET-RJ, Petrópolis, RJ, Brazil}
\affil[4]{Dept. of Telecommunications Engineering, CEFET-RJ, Rio de Janeiro, RJ, Brazil}


\abstract{We present a training-free, certified error bound for quantum regression derived directly from Pauli expectation values. Generalizing the heuristic of minimum accuracy from classification to regression, we evaluate axis-aligned predictors within the Pauli feature space. We formally prove that the optimal axis-aligned predictor constitutes a rigorous upper bound on the minimum training Mean Squared Error (MSE) attainable by any linear or kernel-based regressor defined on the same quantum feature map. Since computing this exact bound requires an intractable scan of the full Pauli basis, we introduce a Monte Carlo framework to efficiently estimate it using a tractable subset of measurement axes. We further provide non-asymptotic statistical guarantees to certify performance within a practical measurement budget. This method enables rapid comparison of quantum feature maps and early diagnosis of expressivity, allowing for the informed selection of architectures before deploying higher-complexity models.}

\keywords{Quantum kernel regression, Pauli decomposition, Monte Carlo sampling, Feature map evaluation.}



\maketitle

\section{Introduction}
\label{sec:intro}

The capacity of machine learning (ML) to extract actionable patterns from high-dimensional data has established it as a standard paradigm for decision-making under uncertainty and automated discovery in complex systems~\cite{russel2010, bishop2006pattern}. Quantum machine learning (QML) is a rapidly growing field that aims to transcend classical ML limitations by exploiting the unique structure and computational resources of quantum systems~\cite{biamonte2017quantum, Rocchetto2018}. One of the most promising QML paradigms relies on quantum kernel methods. Central to this framework is the \emph{feature map}, a transformation that embeds classical input data into a high-dimensional Hilbert space to render complex, non-linear relationships amenable to linear analysis. In the quantum setting, this map encodes data into quantum states via parameterized quantum circuits, allowing classical algorithms to operate on the resulting quantum-induced feature space. This hybrid quantum--classical framework was established in the foundational works of Havlíček et al.~\cite{havlicek2019supervised}, Schuld and Killoran~\cite{schuld2019quantum}, and Mitarai et al.~\cite{mitarai2018quantum}. Building on these seminal contributions, subsequent research has extensively investigated the hardware implementation and theoretical properties of these maps~\cite{huang2021power, li2022quantumkernel, tacchino2020quantum, gentinetta2024complexity}.

However, leveraging these quantum feature spaces in practice involves significant hurdles. 
In variational approaches, the optimization of parameterized circuits is frequently obstructed by the phenomenon of \emph{barren plateaus}, where the gradients of the cost function vanish exponentially with the number of qubits, rendering deep architecture training effectively impossible~\cite{mcclean2018barren, cerezo2021cost, larocca2024barren}. 
To navigate this landscape without full training, geometric descriptors such as \emph{expressibility} and \emph{entangling capability}~\cite{sim2019expressibility} have been proposed to characterize the ansatz structure. While these task-agnostic metrics provide valuable insights into the statistical distribution of states, they do not directly predict the generalization performance or the training error for a specific target function.
Furthermore, although quantum kernel methods circumvent gradient-based optimization, they face severe scalability constraints in the Noisy Intermediate-Scale Quantum (NISQ) era~\cite{preskill2018quantum}, where limited coherence times and hardware noise make the estimation of full kernel matrices $N \times N$ prohibitively expensive. Validating a quantum model also requires benchmarking against robust classical baselines, such as standard Support Vector Regression (SVR)~\cite{smola2004tutorial}, which sets a high threshold for performance.

Consequently, assessing and comparing quantum feature maps by fully training models for every candidate architecture becomes computationally intractable. 
In kernel-based regression, this typically necessitates constructing the full kernel matrix and tuning regularization hyperparameters; in variational approaches, it requires iterative circuit evaluations for gradient updates. These costs scale poorly with dataset size and circuit depth. 
To address similar challenges in classification, Suzuki et al.~\cite{suzuki2022analysis} proposed the \emph{minimum accuracy} heuristic. 
This method estimates the baseline performance obtainable by restricting measurements to axis-aligned Pauli observables, avoiding explicit optimization. 
However, the original formulation focuses exclusively on classification and relies on a full scan of the Pauli basis. 
This creates two significant gaps: first, there is no direct analogue for regression tasks, where minimizing continuous error, such as the mean squared error (MSE), is paramount; second, the computational cost of a full Pauli decomposition scales as $4^n$ for $n$ qubits, rendering the exact metric intractable for larger systems.

In this work, we extend the analysis and practical utility of feature map diagnostics in three main directions, generalizing the axis-aligned concept to quantum regression. 
First, we formally define a regression setting in the Pauli feature space and introduce a training-free score based on single-coordinate least squares. 
Unlike geometric heuristics, this metric is directly grounded in the minimization of the MSE. 
Second, we provide a rigorous theoretical result (Theorem~\ref{thm:axis-upper-bound}) proving that the MSE of the best axis-aligned predictor constitutes a certified \emph{upper bound} on the minimum training MSE attainable by any linear or kernel-based regressor on the same feature map. 
This formally justifies the score as a ``guaranteed limit'': if the axis-aligned error is low, the full quantum model is mathematically guaranteed to perform at least as well. 
Third, we introduce a statistically-certified Monte Carlo (MC) framework to render this bound computationally viable. 
Instead of the intractable $4^n$ scan, we introduce an estimator, $\widehat{\mathrm{MSE}}_{\mathrm{axis}}$, computed from a random subset of axes. 
We derive non-asymptotic probability guarantees based on Hoeffding's concentration inequalities~\cite{hoeffding1963probability}, relating the sample size to the reliability of the bound (Theorem~\ref{thm:coverage}).

From a practical standpoint, our results transform this theoretical bound into a scalable tool for pre-screening quantum feature maps. 
The proposed workflow allows one to avoid the exponential cost of characterizing the full feature space. 
By drawing a random sample of $t \ll 4^n$ Pauli axes and computing simple 1D regression scores, one obtains a certified upper bound on the model's potential error. 
This enables the rapid comparison of architectures, early diagnosis of expressivity issues, and informed resource allocation, effectively filtering out poor feature maps before deploying higher-complexity models. 
Consequently, the computational burden of feature map selection is drastically reduced from the polynomial complexity of full kernel training (typically $\mathcal{O}(N^3)$) to a scalable cost dominated by the number of sampled axes $t$, where $t$ can be kept small while maintaining rigorous statistical confidence.

The remainder of this paper is organized as follows.
Section~\ref{sec:bound} establishes the theoretical framework, formally defining the axis-aligned regression score and proving it constitutes a certified upper bound on the optimal training error.
Section~\ref{sec:monte-carlo} introduces the Monte Carlo estimation strategy designed to overcome the exponential scaling of the feature space; this section also derives rigorous non-asymptotic statistical guarantees and details the adaptive algorithm for sample-size calibration.
The experimental methodology, including dataset generation and feature map configurations, is detailed in Section~\ref{sec:setup}.
Section~\ref{sec:numerical} presents the numerical benchmarks, validating the estimator's tightness and predictive power against fully trained quantum and classical regressors.
Finally, Section~\ref{sec:conclusions} concludes the work and outlines directions for future research.

\section{Pauli-axis Upper Bound for Quantum Regression}
\label{sec:bound}

We consider supervised regression in the quantum kernel framework, where the goal is to predict a real-valued label \( y \in \mathbb{R} \) from an input \( \mathbf{x} \in \mathcal{X} \subseteq \mathbb{R}^m \), given a dataset \( D = \{(\mathbf{x}_k, y_k)\}_{k=1}^N \). 
In this approach, classical data are embedded into an \( n \)-qubit Hilbert space through a quantum feature map \( |\Phi(\mathbf{x})\rangle = U(\mathbf{x})\,|0\rangle^{\otimes n} \), where \( U(\mathbf{x}) \) is a parameterized encoding circuit. 
The similarity between two encoded inputs is quantified by the fidelity kernel
\begin{equation}
K(\mathbf{x},\mathbf{z}) = \big|\langle \Phi(\mathbf{x}) \mid \Phi(\mathbf{z}) \rangle\big|^2,
\end{equation}
which can be estimated on a quantum device via the transition probability of the corresponding circuit~\cite{havlicek2019supervised}, while the regression model itself is optimized classically.

To analyze the geometry of this mapping, we consider the corresponding density operators \( \rho(\mathbf{x}) = |\Phi(\mathbf{x})\rangle\langle\Phi(\mathbf{x})| \). 
While the underlying \( n \)-qubit Hilbert space has complex dimension \( 2^n \), the space of Hermitian operators acting on it constitutes a real vector space. 
This operator space is naturally spanned by the set of \( n \)-qubit Pauli operators \(\mathcal{P}_n = \{I,X,Y,Z\}^{\otimes n}\)~\cite{nielsen2010quantum}. 
Consequently, any pure-state density matrix can be expanded in this basis as \( \rho(\mathbf{x}) = \frac{1}{2^n}\sum_{i=1}^{d} a_i(\mathbf{x})\,\sigma^i \), where \( \sigma^i \in \mathcal{P}_n \) and \( a_i(\mathbf{x}) = \mathrm{tr}[\rho(\mathbf{x})\,\sigma^i] \) is the expectation value of the \(i\)-th Pauli operator. 
Using the orthogonality relation \( \mathrm{tr}(\sigma^i\sigma^j) = 2^n\delta_{ij} \), the quantum kernel can be rewritten as
\begin{equation}
K(\mathbf{x},\mathbf{z}) = \frac{1}{2^n}\sum_{i=1}^{d} a_i(\mathbf{x})\,a_i(\mathbf{z}),
\end{equation}
where \(d\) denotes the total dimension of the Pauli feature space, typically \(d = 4^n\) when the full Pauli basis is considered.
This reformulation reveals that the quantum kernel is equivalent to a standard linear kernel in a real feature space of dimension \( 4^n \), providing a direct geometric bridge between quantum and classical descriptions.

\subsection{Axis-Aligned Least-Squares Upper Bound}
\label{sec:axis-bound}

Building upon this representation, we now formalize the least-squares regression model. 
Each encoded input \(\mathbf{x}_k\) is mapped to a feature vector \(\mathbf{a}(\mathbf{x}_k) = [a_1(\mathbf{x}_k), \dots, a_d(\mathbf{x}_k)]\). 
Given a set of target labels \( \{y_k\}_{k=1}^N \), the quality of any predictor \( f: \mathcal{X} \to \mathbb{R} \) is measured by the empirical MSE, following the standard framework of empirical risk minimization~\cite{bishop2006pattern}:
\begin{equation}
\mathrm{MSE}(f) = \frac{1}{N}\sum_{k=1}^N \big(y_k - f(\mathbf{x}_k)\big)^2.
\label{eq:mse-emp}
\end{equation}
We define the family of full affine linear regressors in the Pauli feature space as \( \mathcal{F} = \{ f(\mathbf{x}) = \langle \mathbf{w}, \mathbf{a}(\mathbf{x}) \rangle + b \mid \mathbf{w} \in \mathbb{R}^{d},\; b \in \mathbb{R} \} \). 
Ideally, we seek the predictor in \(\mathcal{F}\) that minimizes Eq.~\eqref{eq:mse-emp}, achieving the optimal training error \(\mathrm{MSE}^{*} = \min_{f\in\mathcal{F}} \mathrm{MSE}(f)\). 
However, characterizing this minimum generally requires full access to the exponentially large feature space. 
As a tractable alternative, we consider the subclass of \emph{axis-restricted} predictors, which depend on a single Pauli feature \(i \in \{1, \dots, d\}\):
\begin{equation}
\mathcal{F}^{(i)}_{\mathrm{axis}} = \Big\{\, f_i(\mathbf{x}) = w\,a_i(\mathbf{x}) + b \;\Big|\; w,b \in \mathbb{R} \,\Big\}.
\end{equation}
We denote the minimum error achievable by exploiting only the \(i\)-th axis as \(\mathrm{MSE}_i = \min_{f\in\mathcal{F}^{(i)}_{\mathrm{axis}}} \mathrm{MSE}(f)\), and the best overall single-axis performance as \(\mathrm{MSE}_{\mathrm{axis}} = \min_{i} \mathrm{MSE}_i\).

The nested structure of these hypothesis classes leads directly to a certified performance guarantee.

\begin{theorem}[Pauli-axis upper bound]
\label{thm:axis-upper-bound}
For any dataset \(D\) encoded by a fixed quantum feature map, the optimal affine linear regression error is upper bounded by the best axis-aligned error:
\[
\mathrm{MSE}^{*} \le \mathrm{MSE}_{\mathrm{axis}}.
\]
\end{theorem}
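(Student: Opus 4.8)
The plan is to exploit the nested structure of the hypothesis classes: every axis-restricted predictor is a special case of a full affine regressor, so minimizing the MSE over the strictly larger family \(\mathcal{F}\) can only lower the attainable error. This reduces the theorem to an elementary monotonicity-of-minima argument, requiring no analytic estimates whatsoever.

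First I would fix an arbitrary axis index \(i \in \{1,\dots,d\}\) and establish the inclusion \(\mathcal{F}^{(i)}_{\mathrm{axis}} \subseteq \mathcal{F}\). Given any \(f_i(\mathbf{x}) = w\,a_i(\mathbf{x}) + b\), I would construct the sparse weight vector \(\mathbf{w}^{(i)} = w\,\mathbf{e}_i \in \mathbb{R}^{d}\), where \(\mathbf{e}_i\) is the \(i\)-th standard basis vector, so that \(\langle \mathbf{w}^{(i)}, \mathbf{a}(\mathbf{x})\rangle + b = w\,a_i(\mathbf{x}) + b = f_i(\mathbf{x})\). This exhibits every element of \(\mathcal{F}^{(i)}_{\mathrm{axis}}\) as a member of \(\mathcal{F}\), confirming the inclusion for each \(i\).

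Next, since minimizing a fixed functional over a larger feasible set yields a value no larger than minimizing over any subset, the inclusion immediately gives
\[
\mathrm{MSE}^{*} = \min_{f \in \mathcal{F}} \mathrm{MSE}(f) \;\le\; \min_{f \in \mathcal{F}^{(i)}_{\mathrm{axis}}} \mathrm{MSE}(f) = \mathrm{MSE}_i,
\]
and this holds for every \(i\). Taking the minimum over \(i\) on the right-hand side then delivers \(\mathrm{MSE}^{*} \le \min_{i}\mathrm{MSE}_i = \mathrm{MSE}_{\mathrm{axis}}\), which is exactly the claim.

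Because the argument is a pure containment inequality, there is no genuine analytic obstacle; the only point requiring care is the attainability of each minimum, so that the \(\min\) notation (rather than \(\inf\)) is justified. I would dispatch this by observing that the empirical MSE in Eq.~\eqref{eq:mse-emp} is a nonnegative convex quadratic in the affine parameters \((\mathbf{w},b)\), whose infimum over any linear parameter subspace is always attained via the associated normal equations, even when the minimizer is not unique. With attainability secured for both \(\mathrm{MSE}^{*}\) and each \(\mathrm{MSE}_i\), the monotonicity step is exact and the bound follows without further estimation.
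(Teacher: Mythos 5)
Your proposal is correct and follows essentially the same route as the paper's proof: the containment \(\mathcal{F}^{(i)}_{\mathrm{axis}} \subseteq \mathcal{F}\) gives \(\mathrm{MSE}^{*} \le \mathrm{MSE}_i\) for each \(i\), and minimizing over \(i\) yields the claim. Your additional remarks on the explicit sparse embedding and on attainability of the minima are careful elaborations of the same argument rather than a different approach.
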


\begin{proof}
Since \(\mathcal{F}^{(i)}_{\mathrm{axis}} \subset \mathcal{F}\) for every coordinate \(i\), the minimum over the larger set \(\mathcal{F}\) cannot exceed the minimum over the subset \(\mathcal{F}^{(i)}_{\mathrm{axis}}\). 
Therefore, \(\mathrm{MSE}^{*} \le \mathrm{MSE}_i\) for all \(i\). 
Taking the minimum over all \(d\) coordinates yields \(\mathrm{MSE}^{*} \le \min_i \mathrm{MSE}_i = \mathrm{MSE}_{\mathrm{axis}}\).
\end{proof}

This result extends naturally to more complex hypothesis classes \(\mathcal{H} \supseteq \mathcal{F}\), such as those used in kernel ridge regression or support vector regression (SVR) on the induced feature space. 
Since these methods optimize over a space that includes all linear functions, their empirical error satisfies \(\inf_{h\in\mathcal{H}}\mathrm{MSE}(h) \le \mathrm{MSE}^{*}\). 
Consequently, \(\mathrm{MSE}_{\mathrm{axis}}\) serves as a conservative, training-free proxy: if a simple axis-aligned model achieves low error, the more expressive kernel model is mathematically guaranteed to perform at least as well.
However, to utilize this bound as a diagnostic tool, we require an explicit, computable expression for \(\mathrm{MSE}_{\mathrm{axis}}\) in terms of observable data statistics.

\subsection{Analytical Derivation and Complexity}
\label{sec:analytical-derivation}

We now derive an explicit analytical expression for the bound in Theorem~\ref{thm:axis-upper-bound}. 
For any fixed Pauli coordinate \(i\), the optimal predictor \(f \in \mathcal{F}^{(i)}_{\mathrm{axis}}\) corresponds to the univariate ordinary least squares (OLS) solution. 
Its parameters depend solely on the empirical moments of the data. 
Let us define the sample means, variances, and covariances as:
\begin{align}
\bar{y} &= \frac{1}{N}\sum_{k=1}^N y_k, & \bar{a}_i &= \frac{1}{N}\sum_{k=1}^N a_i(\mathbf{x}_k), \nonumber \\
\mathrm{Var}(a_i) &= \frac{1}{N}\sum_{k=1}^N \big(a_i(\mathbf{x}_k)-\bar{a}_i\big)^2, & \mathrm{Cov}(a_i,y) &= \frac{1}{N}\sum_{k=1}^N \big(a_i(\mathbf{x}_k)-\bar{a}_i\big)\big(y_k-\bar{y}\big).
\end{align}
If \(\mathrm{Var}(a_i)>0\), the optimal coefficients \(w_i^{\star}\) and \(b_i^{\star}\) are given by the standard OLS result:
\begin{equation}
w_i^{\star}=\frac{\mathrm{Cov}(a_i,y)}{\mathrm{Var}(a_i)},\qquad
b_i^{\star}=\bar{y}-w_i^{\star}\bar{a}_i.
\end{equation}
Substituting these into the error function yields the residual variance:
\begin{equation}
\mathrm{MSE}_i = \mathrm{Var}(y) - \frac{\mathrm{Cov}(a_i,y)^2}{\mathrm{Var}(a_i)}
= \mathrm{Var}(y)\big(1-\rho_{y,a_i}^{\,2}\big),
\end{equation}
where \(\rho_{y,a_i}\) is the Pearson correlation coefficient between the target \(y\) and the feature \(a_i\). 
This equality follows from the fundamental property of simple linear regression, where the coefficient of determination \(R^2\) is exactly the square of the correlation coefficient~\cite{kutner2005applied}.
In the degenerate case where \(\mathrm{Var}(a_i)=0\), the predictor reduces to the constant mean \(f_i(\mathbf{x})=\bar{y}\), yielding \(\mathrm{MSE}_i=\mathrm{Var}(y)\).

By substituting this result back into Theorem~\ref{thm:axis-upper-bound}, we obtain a closed-form expression for the global upper bound:
\begin{equation} \label{eq:closed_form}
\mathrm{MSE}_{\mathrm{axis}}
= \min_{1\le i\le d} \mathrm{MSE}_i
= \mathrm{Var}(y)\left(1 - \max_{1\le i\le d}\rho_{y,a_i}^{\,2}\right).
\end{equation}
Eq.~\eqref{eq:closed_form} offers a powerful geometric interpretation: the certified upper bound is determined solely by the single Pauli axis that exhibits the strongest marginal correlation with the target variable.

Regarding computational complexity, evaluating Eq.~\eqref{eq:closed_form} requires iterating over all feature dimensions. 
Ideally, this consumes \(O(Nd)\) time and \(O(d)\) memory. 
However, two practical considerations arise. 
First, when Pauli expectations are estimated from finite quantum shots, the empirical moments inherit sampling noise. 
In statistical terms, this introduces measurement error in the regressors, leading to \emph{attenuation bias}~\cite{fuller1987measurement}: the noise artificially inflates \(\mathrm{Var}(a_i)\) and systematically shrinks \(|\rho_{y,a_i}|\). 
As a result, the computed \(\mathrm{MSE}_{\mathrm{axis}}\) becomes conservative (larger than the ideal noiseless value), but preserves the validity of the upper bound.

Second, and more critically, the dimension \(d\) corresponds to the full operator basis size, \(d=4^n\). 
While the calculation is linear in \(d\), the exponential growth of the basis with the number of qubits renders the exact evaluation of \(\max_i \rho_{y,a_i}^2\) intractable for multi-qubit systems. 
This scaling bottleneck motivates the introduction of the Monte Carlo estimation framework presented in the next section.

\section{Monte Carlo Estimation Framework}
\label{sec:monte-carlo}

To overcome the exponential scaling of the Pauli basis ($d=4^n$) identified in Section~\ref{sec:bound}, we introduce a probabilistic estimation method. 
A similar Monte Carlo strategy has been recently proposed to estimate the minimum accuracy of quantum classifiers~\cite{goncalves2025certified}. 
In this work, we extend and formalize this framework for the continuous regression domain, deriving specific concentration bounds for the MSE.

Instead of scanning all $d$ coordinates, we uniformly sample a random subset of axes \( T \subset \{1, \dots, d\} \) of cardinality \( t \), where \( t \ll d \). 
We define the Monte Carlo estimator as the minimum error found within that subset:
\begin{equation}
\widehat{\mathrm{MSE}}_{\mathrm{axis}}(T) := \min_{i \in T} \mathrm{MSE}_i = \mathrm{Var}(y)\left(1 - \max_{i \in T}\rho_{y,a_i}^{\,2}\right).
\label{eq:mc-estimator}
\end{equation}
This estimator selects the best axis from the subset \(T\), providing an efficiently computable surrogate for \(\mathrm{MSE}_{\mathrm{axis}}\). 
Crucially, despite being an approximation, this estimator preserves the rigorous performance guarantee of the exact metric, as established by the following theorem.

\begin{theorem}[Monte Carlo bound]
\label{teo:MCbound}
For any non-empty subset \( T \subseteq \{1, \dots, d\} \), the optimal affine linear error \(\mathrm{MSE}^*\) is upper bounded by the Monte Carlo estimator:
\begin{equation}
\mathrm{MSE}^* \le \mathrm{MSE}_{\mathrm{axis}} \le \widehat{\mathrm{MSE}}_{\mathrm{axis}}(T).
\label{eq:mc-bound}
\end{equation}
\end{theorem}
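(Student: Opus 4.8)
The plan is to establish the two-link chain in Eq.~\eqref{eq:mc-bound} separately, since the left inequality is already in hand and the right one reduces to a purely set-theoretic observation about minima. For the first link, \(\mathrm{MSE}^* \le \mathrm{MSE}_{\mathrm{axis}}\), I would simply invoke Theorem~\ref{thm:axis-upper-bound}, which has already been proven via the nesting \(\mathcal{F}^{(i)}_{\mathrm{axis}} \subset \mathcal{F}\); no new argument is required for this part.

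For the second link, \(\mathrm{MSE}_{\mathrm{axis}} \le \widehat{\mathrm{MSE}}_{\mathrm{axis}}(T)\), the key step is the monotonicity of the minimum under index-set inclusion. Unpacking the definitions, \(\mathrm{MSE}_{\mathrm{axis}} = \min_{1\le i\le d}\mathrm{MSE}_i\) ranges over the full index set \(\{1,\dots,d\}\), whereas \(\widehat{\mathrm{MSE}}_{\mathrm{axis}}(T) = \min_{i\in T}\mathrm{MSE}_i\) ranges only over \(T \subseteq \{1,\dots,d\}\). Since \(\{\mathrm{MSE}_i : i \in T\} \subseteq \{\mathrm{MSE}_i : 1 \le i \le d\}\), the minimum taken over the smaller collection cannot be smaller than the minimum taken over the larger one, which yields \(\mathrm{MSE}_{\mathrm{axis}} \le \widehat{\mathrm{MSE}}_{\mathrm{axis}}(T)\) directly. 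Transitivity then closes the chain \(\mathrm{MSE}^* \le \mathrm{MSE}_{\mathrm{axis}} \le \widehat{\mathrm{MSE}}_{\mathrm{axis}}(T)\).

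I do not anticipate any genuine analytic obstacle, because both inequalities reduce to the elementary principle that restricting an optimization to a smaller feasible set can only increase the optimal value. The only point that warrants care is the hypothesis that \(T\) is non-empty: this is precisely what guarantees that \(\min_{i\in T}\mathrm{MSE}_i\) is well-defined, since a minimum over the empty set would be vacuous. With that caveat explicitly built into the statement, the argument is a short composition of Theorem~\ref{thm:axis-upper-bound} with set-inclusion monotonicity, leaving no residual technical difficulty.
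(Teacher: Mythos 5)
Your proposal is correct and follows essentially the same route as the paper: the left inequality is inherited from Theorem~\ref{thm:axis-upper-bound}, and the right inequality follows from the monotonicity of the minimum under restriction to the subset \(T\), with transitivity closing the chain. Your added remark on non-emptiness of \(T\) is a minor but sound point of care that the paper leaves implicit.
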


\begin{proof}
By definition, \(\mathrm{MSE}_{\mathrm{axis}} = \min_{1 \le j \le d} \mathrm{MSE}_j\). 
Since the minimum over a subset \(T\) is necessarily greater than or equal to the minimum over the full set (i.e., \(T \subseteq \{1, \dots, d\} \implies \min_{i \in T} x_i \ge \min_{all} x_i\)), it follows that \(\mathrm{MSE}_{\mathrm{axis}} \le \widehat{\mathrm{MSE}}_{\mathrm{axis}}(T)\). 
Combining this with Theorem~\ref{thm:axis-upper-bound}, which states \(\mathrm{MSE}^* \le \mathrm{MSE}_{\mathrm{axis}}\), yields the chain of inequalities \(\mathrm{MSE}^* \le \mathrm{MSE}_{\mathrm{axis}} \le \widehat{\mathrm{MSE}}_{\mathrm{axis}}(T)\).
\end{proof}

The inequalities in (\ref{eq:mc-bound}) confirm that the MC estimator serves as a certified upper bound on the optimal error \(\mathrm{MSE}^*\), requiring no model training. 
Furthermore, this bound is systematically tightenable: the map \(T \mapsto \widehat{\mathrm{MSE}}_{\mathrm{axis}}(T)\) is non-increasing with respect to set inclusion. 
Specifically, for nested subsets \(T_t \subset T_{t+1}\), the estimator satisfies \(\widehat{\mathrm{MSE}}_{\mathrm{axis}}(T_{t+1}) \le \widehat{\mathrm{MSE}}_{\mathrm{axis}}(T_t)\), converging to the exact value \(\mathrm{MSE}_{\mathrm{axis}}\) as \(t \to d\).

\subsection{Statistical Guarantees and Threshold Selection}
\label{sec:Statisticalguarantees}

While Theorem~\ref{teo:MCbound} establishes that the Monte Carlo estimator
$\widehat{\mathrm{MSE}}_{\mathrm{axis}}(T)$ is a valid upper bound on the optimal affine regression error,
it does not quantify the probability that this bound is sufficiently tight in practice.
In regression tasks, tightness must be interpreted relative to the intrinsic scale of the target variable.
For this reason, we introduce a target threshold $\tau$ defined in terms of a desired coefficient of determination
$R^2_{\mathrm{target}} \in (0,1]$:
\begin{equation}
\tau = \mathrm{Var}(y)\bigl(1 - R^2_{\mathrm{target}}\bigr).
\label{eq:tau_definition}
\end{equation}
Equivalently, we may write $\tau = \tau_{\mathrm{ratio}} \cdot \mathrm{Var}(y)$, where
$\tau_{\mathrm{ratio}} := 1 - R^2_{\mathrm{target}}$.

Any axis $i$ satisfying $\mathrm{MSE}_i \le \tau$ therefore explains at least a fraction
$R^2_{\mathrm{target}}$ of the sample variance.
To characterize how frequently such axes occur in the Pauli feature space, we define the set of
\emph{good axes}
\begin{equation}
\mathcal{A}_\tau = \{ i \in \{1,\ldots,d\} : \mathrm{MSE}_i \le \tau \},
\label{eq:good-axes}
\end{equation}
and introduce the empirical cumulative distribution function over the finite axis set,
\begin{equation}
F(\tau) = \frac{1}{d}\,|\mathcal{A}_\tau|.
\end{equation}
We denote by $p := F(\tau)$ the fraction of axes achieving the target performance level.

The following result quantifies the probability that a random subset of axes captures at least one good axis.

\begin{theorem}
\label{thm:coverage}
Fix a threshold $\tau$ and let $p = F(\tau)$.
If $T \subset \{1,\ldots,d\}$ is sampled uniformly without replacement with $|T| = t$, then
\begin{equation}
\mathbb{P}\!\left(\widehat{\mathrm{MSE}}_{\mathrm{axis}}(T) \le \tau\right)
= 1 - \frac{\binom{d - \lfloor p d \rfloor}{t}}{\binom{d}{t}}
\;\ge\; 1 - (1-p)^t.
\label{eq:quantile-coverage}
\end{equation}
\end{theorem}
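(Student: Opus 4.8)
The plan is to recast the analytic event in \eqref{eq:quantile-coverage} as a purely combinatorial statement about the random set $T$ and then evaluate it by elementary hypergeometric counting. The key observation is that, because $\widehat{\mathrm{MSE}}_{\mathrm{axis}}(T) = \min_{i\in T}\mathrm{MSE}_i$, the event $\{\widehat{\mathrm{MSE}}_{\mathrm{axis}}(T)\le\tau\}$ holds \emph{if and only if} at least one sampled axis belongs to the good set $\mathcal{A}_\tau$, i.e.\ $T\cap\mathcal{A}_\tau\neq\emptyset$. I would therefore pass to the complement and compute the probability that $T$ avoids $\mathcal{A}_\tau$ entirely, which immediately yields the coverage probability upon subtracting from one.

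For the exact equality, I would note that $|\mathcal{A}_\tau| = \lfloor pd\rfloor$ (indeed $p=F(\tau)=|\mathcal{A}_\tau|/d$, so $pd$ is already the integer $|\mathcal{A}_\tau|$), and that uniform sampling without replacement makes every $t$-subset of $\{1,\dots,d\}$ equally likely. The number of $t$-subsets disjoint from $\mathcal{A}_\tau$ is exactly $\binom{d-\lfloor pd\rfloor}{t}$ while the total is $\binom{d}{t}$, so $\mathbb{P}(T\cap\mathcal{A}_\tau=\emptyset)=\binom{d-\lfloor pd\rfloor}{t}/\binom{d}{t}$, and the stated equality follows by complementation.

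The more substantial step is the lower bound $1-(1-p)^t$, which amounts to showing that the miss probability under sampling \emph{without} replacement never exceeds the corresponding \emph{with}-replacement probability $(1-p)^t$. I would establish this by writing the hypergeometric ratio as the telescoping product
\begin{equation}
\frac{\binom{d-m}{t}}{\binom{d}{t}} = \prod_{j=0}^{t-1}\frac{d-m-j}{d-j},
\end{equation}
with $m=\lfloor pd\rfloor$, and then bounding each factor individually. Since $d-j\le d$ for $j\ge 0$, we have $\tfrac{m}{d-j}\ge \tfrac{m}{d}=p$, hence $\tfrac{d-m-j}{d-j}=1-\tfrac{m}{d-j}\le 1-p$. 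Multiplying the $t$ inequalities gives the product bound $\le (1-p)^t$, and complementing recovers \eqref{eq:quantile-coverage}.

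The main obstacle I anticipate is bookkeeping rather than conceptual. I must handle the degenerate regime $t>d-m$, where $\binom{d-m}{t}=0$ and the coverage probability is trivially $1$ (one cannot avoid $\mathcal{A}_\tau$); here the product contains a vanishing factor at $j=d-m$, so the factorwise bound $\le(1-p)^t$ still holds (vacuously). I would likewise sanity-check the boundary $p=0$, in which $m=0$ makes every factor equal to one, so both sides collapse to $0$ and the inequality is tight. With these edge cases confirmed, the argument is complete.
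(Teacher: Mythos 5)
Your proof is correct and follows essentially the same route as the paper: reformulating the event as $T\cap\mathcal{A}_\tau\neq\emptyset$, computing the hypergeometric miss probability, and bounding it by $(1-p)^t$. The only difference is that you supply the telescoping-product justification for the inequality $\binom{d-m}{t}/\binom{d}{t}\le(1-m/d)^t$ (plus the edge cases), which the paper simply asserts.
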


\begin{proof}
Let $\mathcal{A}_\tau$ denote the set of good axes with cardinality
$m = |\mathcal{A}_\tau| = \lfloor p d \rfloor$.
The event $\widehat{\mathrm{MSE}}_{\mathrm{axis}}(T) \le \tau$ occurs if and only if
$T \cap \mathcal{A}_\tau \neq \emptyset$.
Under uniform sampling without replacement, the probability that $T$ avoids $\mathcal{A}_\tau$
is given by the hypergeometric term $\binom{d-m}{t} / \binom{d}{t}$.
The inequality follows from the bound
$\binom{d-m}{t} / \binom{d}{t} \le (1 - m/d)^t = (1-p)^t$.
\end{proof}

An immediate consequence of Theorem~\ref{thm:coverage} is a sufficient condition on the sample size
required to achieve a desired confidence level.

\begin{corollary}
\label{cor:ssize}
Fix $\delta \in (0,1)$ and suppose that at least a fraction $p \in (0,1)$ of the axes satisfies
$\mathrm{MSE}_i \le \tau$.
If $T$ is sampled uniformly without replacement with $|T| = t$, then the condition
\begin{equation}
t \;\ge\; \frac{\log(1/\delta)}{\log\!\bigl(1/(1-p)\bigr)}
\label{eq:ssize-exact}
\end{equation}
ensures
\[
\mathbb{P}\!\left(\widehat{\mathrm{MSE}}_{\mathrm{axis}}(T) \le \tau\right) \ge 1 - \delta.
\]
In particular, since $-\log(1-p) \ge p$ for all $p \in (0,1)$, the simpler sufficient condition
\begin{equation}
t \;\ge\; \frac{1}{p}\,\log\!\frac{1}{\delta}
\label{eq:ssize-simple}
\end{equation}
also guarantees the same confidence level.
\end{corollary}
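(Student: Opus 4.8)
The plan is to obtain the corollary as a direct inversion of the concentration estimate in Theorem~\ref{thm:coverage}. That theorem supplies the lower bound $\mathbb{P}\!\left(\widehat{\mathrm{MSE}}_{\mathrm{axis}}(T) \le \tau\right) \ge 1 - (1-p)^t$, so to certify confidence at level $1-\delta$ it suffices to force the failure probability to satisfy $(1-p)^t \le \delta$. First I would dispatch a monotonicity point: the corollary only assumes that the true fraction of good axes $F(\tau)$ is \emph{at least} $p$, whereas Theorem~\ref{thm:coverage} is stated for the exact fraction. Since $q \mapsto (1-q)^t$ is decreasing on $(0,1)$, replacing the exact fraction $F(\tau) \ge p$ by its lower bound $p$ can only enlarge the failure probability; hence $1 - (1-F(\tau))^t \ge 1 - (1-p)^t$, and it is legitimate to work with $p$ as a conservative worst case throughout.

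Next I would invert $(1-p)^t \le \delta$ for $t$. Taking natural logarithms gives $t\,\log(1-p) \le \log\delta$. The step requiring care is that $0 < 1-p < 1$ forces $\log(1-p) < 0$, so dividing both sides by $\log(1-p)$ reverses the inequality and yields
\[
t \;\ge\; \frac{\log\delta}{\log(1-p)} \;=\; \frac{\log(1/\delta)}{\log\!\bigl(1/(1-p)\bigr)},
\]
where the final equality merely rewrites numerator and denominator using $\log(1/x) = -\log x$. This establishes the exact sufficient condition in Eq.~\eqref{eq:ssize-exact}.

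For the simplified bound I would invoke the elementary logarithmic inequality $\log(1+x) \le x$, valid for all $x > -1$. Setting $x = -p$ gives $\log(1-p) \le -p$, equivalently $\log\!\bigl(1/(1-p)\bigr) = -\log(1-p) \ge p > 0$. Taking reciprocals of these positive quantities reverses the inequality to $1/\log\!\bigl(1/(1-p)\bigr) \le 1/p$, so that
\[
\frac{\log(1/\delta)}{\log\!\bigl(1/(1-p)\bigr)} \;\le\; \frac{1}{p}\,\log\frac{1}{\delta}.
\]
Consequently any $t$ satisfying $t \ge \tfrac{1}{p}\log(1/\delta)$ automatically satisfies the exact condition~\eqref{eq:ssize-exact}, and therefore guarantees the same confidence level, which is Eq.~\eqref{eq:ssize-simple}.

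The argument is essentially a one-line manipulation once Theorem~\ref{thm:coverage} is in hand, so there is no genuine analytic obstacle; the only places demanding attention are bookkeeping ones. The main thing to get right is the direction of the inequality when dividing by the negative quantity $\log(1-p)$, since a sign error there would invert the whole bound, together with the verification that using $p$ in place of the exact fraction $F(\tau)$ is the conservative (coverage-lowering) choice. It is precisely this latter point that makes the stated $t$ a genuine \emph{sufficient} sample size rather than merely a necessary one.
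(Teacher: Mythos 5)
Your proposal is correct and follows essentially the same route as the paper: invoke the lower bound $1-(1-p)^t$ from Theorem~\ref{thm:coverage}, invert $(1-p)^t \le \delta$ by taking logarithms (minding the sign of $\log(1-p)$), and deduce the simpler condition from $-\log(1-p)\ge p$. Your additional remark that replacing the exact fraction $F(\tau)$ by the lower bound $p$ is the conservative choice is a small but welcome piece of rigor that the paper's own proof leaves implicit.
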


\begin{proof}
By Theorem~\ref{thm:coverage},
$\mathbb{P}(\widehat{\mathrm{MSE}}_{\mathrm{axis}}(T) \le \tau) \ge 1 - (1-p)^t$.
Requiring the right-hand side to be at least $1-\delta$ yields $(1-p)^t \le \delta$,
which is equivalent to~\eqref{eq:ssize-exact}.
The bound~\eqref{eq:ssize-simple} follows from $-\log(1-p) \ge p$.
\end{proof}

\subsection{Adaptive Sample-Size Calibration}
\label{sec:adaptive-p}

The sample-size bounds in Corollary~\ref{cor:ssize} depend on the unknown fraction
$p = F(\tau)$ of axes satisfying the threshold condition.
Since computing $p$ via an exhaustive scan of the $d = 4^n$ Pauli axes is intractable at scale,
we employ an adaptive Monte Carlo procedure that estimates this quantity online while preserving
rigorous probabilistic guarantees.

Fixing a threshold $\tau$, each sampled axis $i$ is associated with the indicator variable
\begin{equation}
X_i := \mathbf{1}\{\mathrm{MSE}_i \le \tau\},
\label{eq:bernoulli}
\end{equation}
which records whether the axis meets the target performance.
Although sampling is performed without replacement from a finite population,
we employ one-sided Hoeffding bounds, which remain conservative under this regime.

After sampling a subset $T$ of $t$ distinct axes, define
\begin{equation}
s := \sum_{i \in T} X_i,
\qquad
\hat{p} := \frac{s}{t},
\label{eq:empirical-p}
\end{equation}
as the number of successful axes and the corresponding empirical success rate.
For a confidence parameter $\alpha \in (0,1)$, a lower confidence bound on $p$ is given by
\begin{equation}
p_L
:= \max\!\left\{0,\;
\hat{p} - \sqrt{\frac{1}{2t}\log\frac{1}{\alpha}}
\right\},
\label{eq:plower}
\end{equation}
which holds with probability at least $1-\alpha$.

Substituting $p_L$ into the exact coverage condition of Theorem~\ref{thm:coverage}
yields a data-driven estimate of the required sample size,
\begin{equation}
t_{\mathrm{req}}
:= \frac{\log(1/\delta)}{\log\!\bigl(1/(1-p_L)\bigr)},
\label{eq:treq}
\end{equation}
with the convention $t_{\mathrm{req}} = \infty$ when $p_L = 0$.
Sampling proceeds iteratively until the stopping condition $t \ge t_{\mathrm{req}}$ is met,
at which point the estimator satisfies
$\mathbb{P}(\widehat{\mathrm{MSE}}_{\mathrm{axis}}(T) \le \tau) \ge 1 - \delta$.

To prevent unbounded execution in regimes where high-quality axes are extremely sparse,
we impose a maximum sampling budget $t_{\max}$.
In addition, a futility stopping criterion is employed:
if no satisfactory axes are observed ($\hat{p}=0$) and the confidence width falls below a minimal tolerance,
the procedure terminates early.
In such cases, the returned value $\widehat{\mathrm{MSE}}_{\mathrm{axis}}(T)$
remains a valid certified upper bound on the optimal regression error by
Theorem~\ref{teo:MCbound}, albeit without the threshold guarantee.
This adaptive strategy, summarized in Algorithm~\ref{alg:adaptive},
translates the theoretical coverage bounds into a practical and robust stopping rule,
allocating computational effort in proportion to the empirical density of high-performing axes.

\begin{algorithm}[th]
\caption{Adaptive Monte Carlo calibration of the axis-aligned bound}
\label{alg:adaptive}
\begin{algorithmic}[1]
\Require $\tau_{\mathrm{ratio}}$, $\delta_{\mathrm{total}}$, $t_0$, $b$, $t_{\max}$, $\epsilon_{\min}$
\Ensure Sampled set $T$, estimator $\widehat{\mathrm{MSE}}_{\mathrm{axis}}(T)$

\State $\tau \leftarrow \tau_{\mathrm{ratio}} \mathrm{Var}(y)$, \quad $\alpha = \delta = \delta_{\mathrm{total}}/2$
\State Sample initial set $T$ ($|T|=t_0$); \quad $t \leftarrow t_0$, $s \leftarrow \sum_{i\in T} \mathbf{1}\{\mathrm{MSE}_i \le \tau\}$

\While{$t < t_{\max}$}
    \State $\hat p \leftarrow s/t, \quad \epsilon \leftarrow \sqrt{\frac{1}{2t}\log\frac{1}{\alpha}}, \quad p_L \leftarrow \max\{0,\hat p - \epsilon\}$
    \State $t_{\mathrm{req}} \leftarrow {\log(1/\delta)}/{\log(1/(1-p_L))}$
    
    \If{$t \ge t_{\mathrm{req}}$ \textbf{or} ($\hat p = 0 \land \epsilon < \epsilon_{\min}$)} \textbf{break} \EndIf

    \State Update $T$ with $b$ new axes; \quad $s \leftarrow s + \Delta s, \quad t \leftarrow t + b$
\EndWhile

\State \Return $T,\; \min_{i \in T} \mathrm{MSE}_i$
\end{algorithmic}
\end{algorithm}

Algorithm~\ref{alg:adaptive} formalizes the adaptive Monte Carlo procedure used throughout the experimental section, providing a certified stopping rule that avoids exhaustive scans of the $4^n$ axis-aligned hypothesis space.

\section{Experimental Setup}
\label{sec:setup}

To validate the theoretical framework and the adaptive Monte Carlo procedure, we designed a comprehensive experimental suite varying structural complexity, noise levels, and feature map architectures.

\subsection{Datasets and Benchmarking Strategy}
\label{subsec:datasets}

To rigorously validate the behavior of the adaptive estimator, we employ a hybrid benchmarking strategy. 
First, we construct two synthetic landscapes with diametrically opposed geometric properties, one dense and correlated, the other sparse and high-dimensional, to serve as controlled stress tests. 
Since the ground truth structure of these tasks is known by design, they allow us to verify if the estimator's convergence behavior aligns with theoretical expectations. 
Finally, we challenge the method on a real-world dataset to assess its robustness under unstructured noise.

The three representative regression tasks are summarized in Table~\ref{tab:datasets}:
(i) \texttt{synthetic\_corr\_gauss}, representing a moderate regime with correlated features;
(ii) \texttt{synthetic\_sparse}, a hard regime with high-dimensional sparse dependencies; and
(iii) \texttt{real\_california}, a subset of the California Housing dataset reduced to \(n=4\) principal components, serving as a real-world benchmark with natural noise.

\begin{table}[htbp]
\centering
\caption{Profile of datasets used in the numerical experiments.}
\label{tab:datasets}
\renewcommand{\arraystretch}{1.2}
\begin{tabularx}{\linewidth}{l X}
\toprule
\textbf{Property} & \textbf{Details} \\
\midrule
\textbf{Dataset} & \texttt{synthetic\_corr\_gauss} \\
\textbf{Regime} & \textbf{Moderate (Dense):} Information is spread across correlated axes; high density of near-optimal solutions. \\
\textbf{Target Function} & $y = \exp(-\frac{1}{2}\mathbf{x}^\mathsf{T}\Sigma^{-1}\mathbf{x})$, with $\Sigma_{ij}=0.3+0.7\delta_{ij}$. \\
\midrule
\textbf{Dataset} & \texttt{synthetic\_sparse} \\
\textbf{Regime} & \textbf{Hard (Sparse):} Information is concentrated in a hidden subset; low density of informative axes (needle-in-haystack). \\
\textbf{Target Function} & $y = \sum_{i\in \mathcal{S}}\sin(x_i) + 0.1\sum_{j\notin\mathcal{S}}x_j$, where $\mathcal{S}$ is the active subset. \\
\midrule
\textbf{Dataset} & \texttt{real\_california} \\
\textbf{Regime} & \textbf{Real-world:} Natural data distribution with unstructured noise and complex dependencies. \\
\textbf{Description} & California Housing (PCA reduced to $n=4$, retaining 95\% variance). \\
\bottomrule
\end{tabularx}
\end{table}

\subsection{Quantum Feature Map and Configuration Space}
\label{sec:feature_map_def}

The quantum feature space is generated by the Pauli Feature Map ansatz~\cite{havlicek2019supervised}, which maps classical data \(\mathbf{x}\in\mathbb{R}^n\) into an \(n\)-qubit state via interleaved Hadamard layers and data-dependent entangling gates:
\begin{equation}\label{unitary}
\mathcal{U}_{\Phi(\mathbf{x})}=\big(U_{\Phi(\mathbf{x})}\,H^{\otimes n}\big)^r,
\qquad
U_{\Phi(\mathbf{x})}=\exp\!\left(i\sum_{S\in\mathcal{I}}\phi_S(\mathbf{x})\prod_{k\in S}P_k\right).
\end{equation}
Here, \(P_k\in\{X,Y,Z\}\) are Pauli operators, \(\mathcal{I}\) defines the connectivity graph, and \(\phi_S(\mathbf{x})\) is a classical encoding function.

To ensure robustness and architectural diversity, our study does not rely on a single model. Instead, we conduct a comprehensive evaluation across a total of 324 experimental configurations, comprising 108 distinct feature maps applied to each dataset. We systematically explore this design space by permuting the three key components defined in Table~\ref{tab:featuremap-config}: (1) \emph{classical preprocessing}, to handle data scaling; (2) \emph{Pauli sequences}, determining the basis of quantum correlations; and (3) \emph{data-mapping rules}, controlling the nonlinearity of the phase encoding. These core components are further combined with varying entanglement patterns and circuit depths ($r \in \{1,2\}$) to cover a broad spectrum of model complexity.

\begin{table}[h]
\centering
\caption{Configuration space definition. The Pauli sequences include \texttt{all\_pairs} which comprises all two-body combinations (\texttt{XY}, \texttt{YZ}, \texttt{ZZ}, etc.).}
\label{tab:featuremap-config}
\renewcommand{\arraystretch}{1.3}
\begin{tabularx}{\linewidth}{l >{\raggedright}X >{\raggedright\arraybackslash}X}
\toprule
\textbf{Stage} & \textbf{Option and definition} & \textbf{Qualitative effect} \\
\midrule
\multirow{3}{*}{\shortstack[l]{\textbf{Classical}\\\textbf{preprocessing}}}
  & \texttt{id}: \(f(x)=x\) & Unbounded inputs (raw data) \\
  & \texttt{tanh}: \(f(x)=\tanh(x)\) & Bounded to \([-1, 1]\), outlier-robust \\
  & \texttt{rbf-s1}: \(f(x)=e^{-x^2/2}\) & Local similarity emphasis (Gaussian) \\
\midrule
\multirow{3}{*}{\shortstack[l]{\textbf{Pauli}\\\textbf{sequence}}}
  & \texttt{Z+ZZ} ([Z, ZZ]) & Computational basis correlations \\
  & \texttt{Y+YY} ([Y, YY]) & Off-diagonal interference (complex) \\
  & \texttt{all\_pairs} ([XY, YZ, ZZ...]) & Richer entanglement structure \\
\midrule
\multirow{3}{*}{\shortstack[l]{\textbf{Data-mapping}\\(\(\phi(x)\))}}
  & \texttt{prod}: \(\prod_i x_i\) & Purely nonlinear cross-terms \\
  & \texttt{pi*prod}: \(\pi \prod_i x_i\) & Full \(2\pi\) phase coverage \\
  & \texttt{sum+prod}: \(\sum_i x_i + \prod_i x_i\) & Mixture of additive and multiplicative terms \\
\bottomrule
\end{tabularx}
\end{table}

\section{Numerical Performance and Benchmarking}
\label{sec:numerical}

This section assesses the predictive power of the certified bound by comparing it against the performance of fully trained regression models. 
Our primary goal is to demonstrate that the training-free estimator \(\widehat{\mathrm{MSE}}_{\mathrm{axis}}(T)\) effectively identifies high-quality feature maps without requiring expensive variational optimization.

For \(n=4\) qubits (\(d=256\)), we first performed a training-free scan across the grid of \(108\) configurations per dataset. 
The adaptive estimator was configured with a confidence level \(\delta_{\mathrm{total}}=0.05\) and a relative threshold \(\tau_{\mathrm{ratio}}=0.95\). 
Sampling was initialized with a pilot size \(t_0=50\) and updated in batches of \(b=20\), up to a maximum budget \(t_{\max}=d\).

For each dataset, we identified the single best-performing architecture, referred to as the ``anchor'' configuration, based solely on the adaptive estimator.
To validate this selection, we then trained two regression models specifically for these anchors:
\begin{enumerate}
    \item \textbf{Quantum Ridge Regression (\(\mathrm{MSE}_{\mathrm{ridge}}\)):} A linear regressor with \(\ell_2\) regularization (\(\alpha=10^{-3}\)) trained directly on the \(d\)-dimensional feature vector \(\Phi(\mathbf{x})\). This serves as an ideal proxy for a noise-free Quantum Support Vector Regressor (QSVR).
    \item \textbf{Classical SVR (\(\mathrm{MSE}_{\mathrm{svr}}\)):} A standard Support Vector Regressor with an RBF kernel trained on the original classical data, serving as a baseline.
\end{enumerate}

\subsection{Results and Discussion}

Table~\ref{tab:anchors-comparison} summarizes the results for the best ``anchor'' configuration identified for each dataset (see Table~\ref{tab:datasets} for dataset profiles).
The results reveal three key insights regarding the alignment between the certified bound, the dataset geometry, and the actual training potential:

\begin{table}[t]
\centering
\caption{Anchor configurations: Comparison between the training-free Monte Carlo bound (\(\widehat{\mathrm{MSE}}_{\mathrm{axis}}\)), the exact single-axis optimum (\(\mathrm{MSE}_{\mathrm{axis}}\)), and fully trained models.}
\label{tab:anchors-comparison}
\small
\setlength{\tabcolsep}{4pt}
\begin{tabular}{l l r r c r r}
\toprule
\textbf{Dataset} & \textbf{Configuration}
& \(\mathrm{MSE}_{\mathrm{axis}}\)
& \(\widehat{\mathrm{MSE}}_{\mathrm{axis}}\)
& \(t_{\mathrm{used}}\)
& \(\mathrm{MSE}_{\mathrm{ridge}}\)
& \(\mathrm{MSE}_{\mathrm{svr}}\) \\
\midrule
\texttt{real\_california}
& \texttt{rbf-s1 | prod | Z+ZZ | lin | r=2}
& 0.6863 & 0.6863 & 110
& 0.4123 & 0.5105 \\

\texttt{syn\_corr\_gauss}
& \texttt{rbf-s1 | prod | Y+YY | lin | r=1}
& 0.3195 & 0.7053 & 50
& 0.1165 & 0.0101 \\

\texttt{syn\_sparse}
& \texttt{tanh | prod | Z+ZZ | full | r=1}
& 0.2684 & 0.2684 & 256$^{*}$
& 0.0000 & 0.0094 \\
\bottomrule
\end{tabular}

\vspace{0.3em}
\footnotesize{
$^{*}$Stopped by budget limit.}
\end{table}

\paragraph{1. Real-World Complexity and Linear Combinations.}
For \texttt{real\_california}, the estimator converged efficiently ($t=110 < d/2$) and matched the exhaustive bound. However, a significant gap remains between the best single-axis feature (\(\mathrm{MSE}_{\mathrm{axis}} \approx 0.68\)) and the fully trained Quantum Ridge model (\(\mathrm{MSE}_{\mathrm{ridge}} \approx 0.41\)).
This behavior is consistent with the ``Real-world'' regime description: while the adaptive procedure correctly identified the best individual Pauli features, the superior performance of the trained regressor confirms that solving realistic tasks requires the linear combination of multiple features (superposition) rather than a single optimal axis. Notably, the Quantum Ridge outperformed the Classical SVR (\(0.51\)), suggesting the Pauli feature map captures non-trivial correlations in the housing data better than the standard RBF kernel.

\paragraph{2. Feature Map Alignment in the Sparse Regime.}
In the \texttt{syn\_sparse} case, the Ridge regressor achieved perfect reconstruction (\(\mathrm{MSE} \approx 0.0000\)), theoretically validating that the chosen feature map (\texttt{Z+ZZ}) perfectly spans the generating function of the dataset.
The adaptive estimator provided a crucial diagnostic here: although it exhausted the budget ($t=256$), confirming the ``Hard/Sparse'' regime where good axes are rare needles in a haystack, it successfully recovered the optimal axis (\(\mathrm{Gap}=0.0\)).
This proves the method's robustness: even when certification is statistically difficult due to sparsity, the search procedure is still effective at locating high-quality features for optimization.

\paragraph{3. The ``Dense Trap'' in Correlated Data.}
The \texttt{syn\_corr\_gauss} dataset illustrates the nuance of the ``Moderate/Dense'' regime. The estimator stopped early ($t=50$) with a loose bound.
This is not a failure, but a consequence of the high density of ``reasonably good'' axes in correlated landscapes. The algorithm quickly satisfied the threshold condition \(\tau\) and stopped, as designed, before stumbling upon the much rarer global optimum.
Furthermore, the Classical SVR (\(0.0101\)) dominated this task, which is expected since the target is a smooth Gaussian function, naturally aligned with the classical RBF kernel but harder to approximate with discrete Pauli strings (Ridge \(\approx 0.11\)).

In summary, the Monte Carlo estimator acts as a cost-effective probe: it correctly flagged the sparsity of \texttt{syn\_sparse} (via budget exhaustion) and the density of \texttt{syn\_corr\_gauss} (via early stopping), while providing a tight predictive bound for the real-world scenario.

\section{Conclusion}
\label{sec:conclusions}

In this work, we introduced a \emph{certified, training-free} upper bound on the optimal training error for quantum kernel regression.
By extending the axis-aligned framework of Suzuki \emph{et al.} from classification to the regression setting, we derived a closed-form metric that links the geometry of the Pauli feature space directly to the mean squared error.
To address the exponential growth of the feature dimension, we proposed an adaptive Monte Carlo estimator equipped with rigorous statistical guarantees based on Hoeffding concentration inequalities.

Extensive numerical experiments on both synthetic and real-world datasets confirmed the efficiency and predictive power of the method.
The estimator consistently converged to the true exhaustive bound using a sampling budget substantially smaller than the full basis dimension, across both sparse and dense landscapes.
Moreover, comparisons against fully trained Ridge regressors demonstrated that the certified bound is not merely a loose theoretical limit, but a high-fidelity predictor of the expressive capacity of a given feature map, enabling reliable discrimination between high- and low-performing architectures without the cost of gradient-based optimization.

While the certification procedure is inherently threshold-dependent, this dependence should be viewed as a feature rather than a limitation.
In our experiments, fixed hyperparameters were adopted to ensure fair comparisons across datasets; however, parameters such as the threshold ratio \(\tau_{\mathrm{ratio}}\) and the confidence level \(\delta_{\mathrm{total}}\) offer principled control over the trade-off between selectivity and sampling cost.
This flexibility allows practitioners to adapt the certification process to dataset-specific characteristics and available computational resources.

Overall, the proposed framework establishes a practical diagnostic tool for \emph{Quantum Model Selection} in the NISQ era.
By filtering out poor feature-map configurations prior to training, it enables computational effort to be focused exclusively on the most promising quantum architectures.
A natural direction for future work is to exploit the subset of informative axes identified by the adaptive procedure as a compressed feature space, potentially guiding or constraining variational optimization while preserving low computational overhead.

\bibliography{sn-bibliography}

\end{document}